\NeedsTeXFormat{LaTeX2e}
\documentclass[12pt]{article}
\usepackage{amscd,amsmath,amssymb,amstext,amsthm,exscale,latexsym}
\usepackage{graphicx}
\textwidth160mm
\textheight 240mm
\topmargin -10mm
\oddsidemargin 10mm
\evensidemargin -10mm
%\hoffset -20mm
%*******************************************************************************
       
\newcommand {\g }   {\gamma}       
\newcommand {\dl}   {\delta}       
        
\newcommand {\ve}   {\varepsilon}

\newcommand {\vf }  {\varphi}      
         \newcommand {\om}  {\omega}

\newcommand {\pl}   {\partial}     \newcommand {\nb}  {\nabla}
%-------------------------------------------------------------------------------
\renewcommand {\sin}{{\sf\,sin\,}}       \renewcommand {\cos}{{\sf\,cos\,}}
%\renewcommand {\tg}{{\sf\,tg\,}}         \renewcommand {\ctg}{{\sf\,ctg\,}}

%\renewcommand {\arctg}{{\sf\,arctg\,}}
%\renewcommand {\sh}{{\sf\,sh\,}}         \renewcommand {\ch}{{\sf\,ch\,}}
%\renewcommand {\tanh}{{\sf\,th\,}}       \renewcommand {\cth}{{\sf\,cth\,}}

%-------------------------------------------------------------------------------

       \renewcommand {\exp}{{\sf\,exp\,}}

       \renewcommand {\lim}{{\sf\,lim\,}}
\newcommand   {\ex}{{\sf\,e}}

     \newcommand   {\diag}{{\sf\,diag\,}}
         \newcommand   {\tr}{{\sf\,tr\,}}

             \renewcommand   {\P}{{\sf P}}

%-------------------------------------------------------------------------------

\newcommand {\MO}  {{\mathbb O}}   
   \newcommand {\MR}  {{\mathbb R}}
\newcommand {\MS}  {{\mathbb S}}   
\newcommand {\MU}  {{\mathbb U}}   
   
   \newcommand {\MZ}  {{\mathbb Z}}

%-------------------------------------------------------------------------------

\newcommand {\Go}  {\mathfrak{o}}   
   
\newcommand {\Gs}  {\mathfrak{s}}

%-------------------------------------------------------------------------------

%-------------------------------------------------------------------------------

%-------------------------------------------------------------------------------
   
\newcommand {\Sc}  {{\textsc{c}}}

\newcommand {\Ss}  {{\textsc{s}}}

%-------------------------------------------------------------------------------
   % "номер"

   %\newcommand {\Sz}  {{\textsc{z}}}
\newcommand {\one}  {1\!\!1}

%-------------------------------------------------------------------------------

\newtheorem{theorem}{Theorem}[section]

\theoremstyle{definition}

%*******************************************************************************
\begin{document}
\title     {Point disclinations in the Chern--Simons geometric theory of
            defects}
\author    {M.~O.~Katanaev
            \thanks{E-mail: katanaev@mi.ras.ru} and
            B.~O.~Volkov
            \thanks{E-mail: borisvolkov1986@gmail.com}\\ \\
              \sl Steklov Mathematical Institute,\\
            \sl ul. Gubkina, 8, Moscow, 119991, Russia}

\maketitle
\begin{abstract}

We use the Chern--Simons action for a $\MS\MO(3)$-connection for the description
of point disclinations in the geometric theory of defects. The most general
spherically symmetric $\MS\MO(3)$-connection with zero curvature is found. The
corresponding orthogonal spherically symmetric $\MS\MO(3)$ matrix and $n$-field
are computed. Two examples of point disclinations are described.
\end{abstract}
%******************************************************************************
\section{Introduction}
%*******************************************************************************
The geometric theory of defects [1--4]
\nocite{KatVol92,Katana05,Katana13B,Katana17C}
describes dislocations (defects in elastic media) and disclinations (defects in
the spin structure of media) in the framework of the Riemann--Cartan geometry.
The curvature and torsion two-forms are surface densities of the Burgers and
Frank vectors, respectively. There are many examples of dislocations described
in the framework of the geometric theory of defects \cite{Katana05}, but only a
few disclinations. As far as we know, the first example of a  straight linear
disclination described within the geometric theory of defects is given in
\cite{Katana17C}. In the present paper, we give another examples of
disclinations. Now these are point disclinations.

The geometric theory of defects is well suited for the description of single
defects as
well as their continuous distribution. The only variables in the theory are
Cartan variables: a vielbein field $e_\mu{}^i$ and $\MS\MO(3)$-connection
$\om_\mu{}^{ij}=-\om_\mu{}^{ji}$, where $\mu,\nu,\dotsc=1,2,3$ and
$i,j,\dotsc=1,2,3$ are space and internal indices, respectively. We consider the
case of the Euclidean vielbein $e_\mu{}^i:=\dl_\mu^i$, which means the absence
of elastic stresses in the media. Then we have the ordinary $\MS\MO(3)$ gauge
Yang--Mills theory, $A_\mu{}^{ij}:=\om_\mu{}^{ij}$, living in the flat
three-dimensional Euclidean space (we consider only the static case). For
single disclinations we have the curvature singularity at the core of disclination
and zero curvature outside the core. Therefore the Chern--Simons action is well
suited for the description of single disclinations because it produces the zero
curvature equations of equilibrium for the $\MS\MO(3)$-connection. The respective
equations were solved for a single straight linear disclination in
\cite{Katana17C}. In the present paper, we find the most general
spherically symmetric $\MS\MO(3)$-connection with zero curvature. It depends
on one arbitrary function on radius. There are no disclinations for the
particular choice of this function. In a general situation, point disclinations
are present. We consider two examples. The first is the hedgehog spherically
symmetric disclination. The second is a point disclination with essential
singularity at the origin and constant $n$-field at infinity.
%******************************************************************************
\section{Chern--Simons action and point disclinations}
%*******************************************************************************
We consider a three-dimensional Euclidean space with Cartesian coordinates
$(x^\mu)\in\MR^3$, $\mu=1,2,3$. Let components
$A_\mu{}^{ij}(x)=-A_\mu{}^{ji}(x)$, $i,j=1,2,3$ of the $\MS\MO(3)$-connection
local form be defined on this space (in other words, the $\MS\MO(3)$ Yang-Mills
fields). From a geometric point of view, we have a topologically trivial
manifold $\MR^3 $ with a given Riemann-Cartan geometry defined by the triad
field $e_\mu{}^i$, satisfying equality
$\dl_{\mu\nu}=e_\mu{}^ie_\nu{}^j\dl_{ij}$, where $\dl_{ij}:=\diag(+++)$ is the
 Euclidean metric, and the $\MS\MO(3)$-connection
$\om_{\mu i}{}^j=A_{\mu i}{}^j$.

The curvature and torsion have usual expressions in terms of Cartan variables
\begin{equation}                                                  \label{qcnftl}
\begin{split}
  R_{\mu\nu}{}^{ij}=&\pl_\mu\om_\nu{}^{ij}-\pl_\nu\om_\mu{}^{ij}
  -\om_\mu{}^{ik}\om_{\nu k}{}^j+\om_\nu{}^{ik}\om_{\mu k}{}^j,
\\
  T_{\mu\nu}{}^i=&\pl_\mu e_\nu{}^i-\pl_\nu e_\mu{}^j-e_\mu{}^j\om_{\nu j}{}^i
  +e_\nu{}^j\om_{\mu j}{}^i.
\end{split}
\end{equation}
Flat vielbein means that elastic stresses are absent in media.

For the description of point disclinations, we choose the Chern--Simons action
\cite{CheSim74}
\begin{equation}                                                  \label{ubbcng}
  S_{\Sc\Ss}:=\int_{\MR^3}\!\!\!\tr\left(dA\wedge A-\frac23A\wedge
  A\wedge A\right),
\end{equation}
where $A:=(A_i{}^j)=(dx^\mu A_{\mu i}{}^j)$ is a local $\MS\MO(3)$-connection
1-form. Variation of action (\ref{ubbcng}) with respect to connection yields
the zero curvature equilibrium equations:
\begin{equation}                                                  \label{unnbch}
  F_{\mu\nu i}{}^j:=\pl_\mu A_{\nu i}{}^j-\pl_\nu A_{\mu i}{}^j
  -A_{\mu i}{}^kA_{\nu k}{}^j+A_{\nu i}{}^kA_{\mu k}{}^j=0.
\end{equation}
That is the connection must be flat. The Chern--Simons action (\ref{ubbcng}) is
well suited for the description of point disclinations because the curvature
must be zero outside the core of disclinations. In what follows, we assume that
a disclination is located at the origin of coordinate system.

Components of the connection can be parameterized by the field with two
indices:
\begin{equation}                                                  \label{unnxbg}
  A_\mu{}^{ij}=A_\mu{}^k\ve_k{}^{ij},\qquad
  A_\mu{}^k:=\frac12A_\mu{}^{ij}\ve^k{}_{ij}.
\end{equation}
where $\ve_{ijk}$, $\ve_{123} = 1$, is the totally antisymmetric tensor.
Then components of the curvature local form of the $\MS\MO(3)$-connection are
\begin{equation}                                                  \label{ubnhyf}
  F_{\mu\nu k}:=\frac12F_{\mu\nu}{}^{ij}\ve_{ijk}
  =\pl_\mu A_{\nu k}-\pl_\nu A_{\mu k}+A_\mu{}^i A_\nu{}^j\ve_{ijk}.
\end{equation}

Now we find the most general spherically symmetric solution to Eq.\
(\ref{unnbch}).

We assume that the global rotation group $\MS\MO(3)$ acts simultaneously both on
the base $\MR^3$, and on the Lie algebra $\Gs\Go(3)$,  which, as a vector
space, is also a three-dimensional Euclidean space $\MR^3$. It means that if
$S\in\MS\MO(3)$ is an orthogonal matrix, then the transformation
has the form
\begin{equation*}
  A_\mu{}^{ij}\mapsto S^{-1\nu}_{~~\mu} A_\nu{}^{kl}S_k{}^iS_l{}^j,\qquad
  S\in\MS\MO(3).
\end{equation*}

Under this assumption, the difference between Greek and Latin indices
disappears, but we shall, as far as possible, distinguish them.

If  we  include  reflections into the rotation group, then $A_\mu{}^{ij}$ become
components of the third rank tensor with respect to the action of the full
rotation group $\MO(3)$, and $A_\mu{}^k$ become components of the second rank
pseudo-tensor, due to the presence of the third rank pseudo-tensor $\ve_{ijk}$.

Now the most general spherically symmetric components of the connection have the
form
\begin{equation}                                                  \label{unbcht}
  A_\mu{}^i(x)=\ve_\mu{}^{ij}\frac{x_j}rW(r)+\dl_\mu^iV(r)
  +\frac{x_\mu x^i}{r^2}U(r),\qquad r\ge0,
\end{equation}
where $W$, $V$, $U$ are arbitrary sufficiently smooth functions of radius
$r:=\sqrt{x^\mu x_\mu}$. These functions are defined only on the non-negative
semi-axis $r\ge0$. Under the action of the full rotation group $\MO(3)$
the function $W$ is a scalar, and $V$ and $U$ are pseudoscalars.

The Lie algebras of $\MS\MO(3)$ and $\MS\MU(2)$ groups are the same, and
$\MS\MU(2)$ Yang--Mills model minimally interacting with the triplet of scalar
fields in the adjoint representation coincides formally with the $\MS\MO(3)$
Yang--Mills model minimally interacting with the triplet of scalar fields in the
fundamental representation. There is the famous 't Hooft--Polyakov monopole
solution \cite{tHooft74,Polyak74} to these models. It is spherically symmetric
and corresponds to ansatz (\ref{unbcht}) with $V=U=0$.

Direct computations of components of the spherically symmetric curvature tensor
lead to the following expression
\begin{multline}                                                  \label{uxnbhg}
  F_{\mu\nu}{}^i=\ve_{\mu\nu}{}^i\left[W'+\frac Wr+V(V+U)\right]
  +\frac{\ve_{\mu\nu}{}^jx_jx^i}{r^3}\big(W-rW'+rW^2-rVU\big)+
\\
  +\frac{x_\mu\dl_\nu^i-x_\nu\dl_\mu^i}{r^2}\big[rV'-U-rW(V+U)\big],
\end{multline}
where the prime mark denotes differentiation by the radius. Now, just like for
t'Hooft-Polyakov monopole, we introduce the dimensionless function $K(r)$ as
\begin{equation*}
  W:=\frac{K-1}r.
\end{equation*}

Then the expression for curvature~(\ref{uxnbhg}) becomes simpler:
\begin{multline}                                                  \label{ubcnhy}
  F_{\mu\nu}{}^i=\frac{\ve_{\mu\nu}{}^i}r\big[K'+rV(V+U)\big]
  +\frac{\ve_{\mu\nu}{}^jx_jx^i}{r^3}\left(-K'+\frac{K^2-1}r-rVU\right)+
\\
  +\frac{x_\mu\dl_\nu^i-x_\nu\dl_\mu^i}{r^2}\big[rV'-U-(K-1)(V+U)\big].
\end{multline}
Equilibrium equations~(\ref{unnbch}) in the spherically symmetric case
yield the following system of equations
\begin{align}                                                     \label{unbhhu}
  K'+rV(V+U)=&0,
\\                                                                \label{unmvjo}
  -K'+\frac{K^2-1}r-rVU=&0,
\\                                                                \label{ukfoij}
  rV'-U-(K-1)(V+U)=&0,
\end{align}
because tensor structures in Eq.\ (\ref{ubcnhy}) are functionally independent.
Thus, in the spherically symmetric case, the equilibrium equations are reduced
to the system of three nonlinear ordinary differential equations on three
arbitrary functions $K$, $V$ and $U$. We shall see below, that these equations
are dependent and therefore a general solution of this system contains a
functional arbitrariness.
\begin{theorem}
A general solution of the system of equations (\ref{unbhhu})--(\ref{ukfoij})
has the form
\begin{align}                                                     \label{ubbxvl}
  K=&\cos f,
\\                                                                \label{ujdfki}
  V=&\pm\frac{\sin f}r,
\\                                                                \label{ujgytr}
  U=&\pm\frac{rf'-\sin f}r,
\end{align}
where $f(r)$ is an arbitrary sufficiently smooth function of radius $r\ge0$ and
the upper or lower signs should be selected simultaneously.
\end{theorem}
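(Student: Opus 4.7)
The plan is first to verify that the proposed three-parameter family solves the system (this is a direct substitution), and then to show that \emph{every} solution has this form by exhibiting a $C^1$ function $f$ such that $(K,V,U)$ is recovered by the formulas. The key observation is that the three equilibrium equations, although ostensibly independent, have a first integral hidden in their sum.

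Concretely, I would add equations (\ref{unbhhu}) and (\ref{unmvjo}) to cancel $K'$; the cross term $rVU$ also cancels, leaving
\begin{equation*}
  rV^2 + \frac{K^2-1}{r} = 0,\qquad\text{i.e.}\qquad K^2 + (rV)^2 = 1.
\end{equation*}
This algebraic relation is the heart of the proof. It tells us that the pair $(K,rV)$ is, at every radius, a point on the unit circle, so we may introduce a smooth radial ``angle'' $f(r)$ via $K=\cos f$, $rV=\varepsilon\sin f$ with $\varepsilon=\pm1$ fixed (the sign choice corresponds to the two orientations of the circle parametrization). This immediately yields (\ref{ubbxvl}) and (\ref{ujdfki}).

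Substituting $K=\cos f$ and $V=\varepsilon\sin f/r$ into (\ref{unbhhu}) gives
\begin{equation*}
  -\sin f\,f' + r\cdot\frac{\varepsilon\sin f}{r}\,(V+U)=0,
\end{equation*}
so away from the zeros of $\sin f$ one reads off $V+U=\varepsilon f'$, hence
\begin{equation*}
  U = \varepsilon f' - V = \varepsilon\,\frac{rf'-\sin f}{r},
\end{equation*}
which is (\ref{ujgytr}) with the sign matched to that in (\ref{ujdfki}). The isolated zeros of $\sin f$ pose no real obstacle: since $U$ is determined there by continuity (or equivalently by differentiating the constraint $K^2+(rV)^2=1$), the formula extends smoothly to all of $r\ge 0$. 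If $\sin f\equiv 0$ on an interval, then $V\equiv 0$ and (\ref{unbhhu})--(\ref{ukfoij}) force $K=\pm1$ and $U=0$, which agrees with $f\equiv \const$.

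Finally, I would verify that the third equation (\ref{ukfoij}) is automatically satisfied by the ansatz, which confirms the claim in the text that the three equations are functionally dependent and that the general solution carries exactly one function's worth of arbitrariness. The main (only) conceptual obstacle is spotting the first integral $K^2+(rV)^2=1$; once it is in hand, the rest is substitution and bookkeeping of signs.
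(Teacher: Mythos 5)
Your proposal is correct and follows essentially the same route as the paper: both add (\ref{unbhhu}) and (\ref{unmvjo}) to obtain the constraint $K^2+(rV)^2=1$, parametrize it by an angle $f$ via $K=\cos f$, $rV=\pm\sin f$, then recover $U$ from one of the remaining equations and check that the third holds identically. The only differences are minor --- you extract $U$ from (\ref{unbhhu}) where the paper uses (\ref{unmvjo}), and you treat the zeros of $\sin f$ explicitly, a point the paper passes over silently.
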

\begin{proof}
Add equations~(\ref{unbhhu}) and~(\ref{unmvjo}):
\begin{equation*}
  rV^2+\frac{K^2-1}r=0.
\end{equation*}
It implies
\begin{equation}                                                  \label{unncmj}
  V=\pm\frac{\sqrt{1-K^2}}r.
\end{equation}
The inequality $|K|\le0$ is necessary and sufficient for the existence of real
solutions of this equation. Therefore, without loss of generality, we put
$K:=\cos f$, where $f(r)$, is a sufficiently smooth function. After
that, Eq.~(\ref{unncmj}) goes to Eq.~(\ref{ujdfki}). Now Eq.~(\ref{unmvjo})
implies expression for $U$ (\ref{ujgytr}). Afterwards, one can verify that the
third Eq.~(\ref{ukfoij}) is automatically satisfied for an arbitrary function
$f$.
\end{proof}

Thus, a general spherically symmetric solution of the Euler--Lagrange equations
(\ref{unnbch}) has the form
\begin{equation}                                                  \label{uncbhf}
  A_\mu{}^i(x)=\frac{\ve_\mu{}^{ij}x_j}{r^2}(\cos f-1)+\dl_\mu^i\frac{\sin f}r
  +\frac{x_\mu x^i}{r^3}(rf'-\sin f),
\end{equation}
where $f(r)$ is an arbitrary sufficiently smooth function. This connection is
flat and satisfies the zero curvature equation everywhere in $\MR^3$ except,
possibly, the origin of coordinate system.

If $f$ is a smooth function and sufficiently fast goes to zero as $r\to0$,
then the curvature of the $\MS\MO(3)$-connection is identically zero on the
whole $\MR^3$, and there are no disclinations. If at the origin $f(0)\ne0$, then
the connection and the curvature can be singular at the origin, and
disclinations may appear. To find their structure, we have to reconstruct the
$n$-field.
%******************************************************************************
\subsection{Disclinations}
%*******************************************************************************
The spin structure of media, e.g.\ ferromagnets, is described by the unit vector
field $n(x)=\big(n^i(x)\big)$, $n^2:=n^in_i=1$. In the geometric theory of
defects \cite{KatVol92,Katana05,Katana13B}, the unit vector field is
parameterized by orthogonal matrix:
\begin{equation}                                                  \label{ubndhy}
  n^i(x)=n_0^jS_j{}^i(x),\qquad S\in\MS\MO(3),
\end{equation}
where $n_0$ is some fixed unit vector. If the fields $n$ and $S$ are continuous,
then defects are absent. By definition, disclinations are discontinuities in the
unit vector field $n$. In that case matrix $S$ is also discontinuous. The
inverse statement may be not true. There may exist situations when matrix $S$ is
discontinuous but $n$-field is continuous, for instance, the hedgehog
disclination considered in the next section. For finite number of disclinations
the unit vector field exists everywhere except the cores
of disclinations, where it has discontinuities. In the limiting case for
continuous distribution of disclinations, the $n$-field does not exist at all
and cannot be used for describing defects. The advantage of the geometric theory
of defects is that the basic variable is the $\MS\MO(3)$-connection which exists
even in the absence of the $n$-field. The criteria for the presence of
disclinations is nonzero curvature of $\MS\MO(3)$-connection. If curvature is
zero, then the $\MS\MO(3)$-connection is a pure gauge, which implies the
existence of the orthogonal matrix $S$ and, consequently, the $n$-field.

In our case, the curvature of spherically symmetric connection (\ref{uncbhf}) is
zero everywhere except, possibly, the origin of coordinate system. So, the
connection (\ref{uncbhf}) may describe point disclinations located at the
origin. To reconstruct the $n$-field corresponding to connection (\ref{uncbhf}),
we must find the orthogonal matrix $S(x)$ where it exists.

In those  connected and simply connected open subsets of the Euclidean space,
where the curvature is zero, the connection is a pure gauge
\begin{equation}                                                  \label{ubvcgf}
  A_{\mu i}{}^j=A_\mu{}^k\ve_{ki}{}^j=\pl_\mu S^{-1k}_{~~i}S_k{}^j,\qquad
  S=(S_{i}{}^j)\in\MS\MO(3).
\end{equation}
Our aim is to find matrix $S$ for a given connection~(\ref{uncbhf}).
The equation for $S$ is
\begin{equation*}
  \pl_\mu S^{-1}=A_\mu S^{-1},
\end{equation*}
where matrix indices are omitted. This equation has straightforward meaning in
the geometric theory of defects. Namely, consider the covariant derivative of
the $n$-field (\ref{ubndhy}):
\begin{equation*}
  \nb_\mu n^i:=\pl_\mu n^i+n^jA_{\mu j}{}^i
  =n_0^k(\pl_\mu S_k{}^i+S_k{}^jA_{\mu j}{}^i).
\end{equation*}
For pure gauge (\ref{ubvcgf}) it is zero. This means that in domains with
zero curvature the $n$-field is obtained by parallel transport of vector $n_0$
with the pure gauge connection. The result of parallel transportation does not
depend on curves of transportation, because curvature is zero.

If the curvature is zero, the parallel transport is independent of the curve,
along which it is transported. Therefore, we consider an arbitrary curve
$\g=y(t)\in\MR^3\setminus 0$, $t\in[0,b]$ with the starting point $y_0:=y(0)$
and ending point $y_b:=y(b)$. Then for matrix $S$ we get the ordinary
differential equation along $\g$:
\begin{equation}                                                  \label{ujfkiy}
  \dot S^{-1}=\dot y^\mu A_\mu S^{-1}.
\end{equation}
When a curve passes through a point $y(t)$, the solution of this equation
is given by the path-ordered exponent:
\begin{equation}                                                  \label{unbhcy}
  S^{-1}\big(y(t)\big)=\P\exp\left(\int_0^t\!\!\!ds\,\dot
  y^\mu(s)A_\mu(y(s))\right)S^{-1}_0,
\end{equation}
where $S_0$ is a fixed matrix at the starting point $y_0$.

Let  $\g=\big(y^\mu(t):=x^\mu t\big)$, $t\in[1,\infty]$, be a straight half-line
(ray) with the starting point $x\in\MR^3\setminus0$ and the end at infinity.
We assume that  $S_0:=S(\infty):=\one$. It means that we consider connections
$A(x)$ tending to zero as $x\to \infty$. Then for connection~(\ref{ubvcgf}),
the equality
\begin{equation}                                                  \label{ubxnyt}
  \dot y^\mu A_{\mu i}{}^j=f'x^k\ve_{ki}{}^j
\end{equation}
is satisfied. Now it is easily verified, that the matrices under the integral
in the ordered exponent commute:
\begin{equation*}
  [\dot y^\mu A_\mu,\dot y^\nu A_\nu]=0.
\end{equation*}
Hence, the path-ordered exponent coincides with the ordinary one, and
integral~(\ref{unbhcy}) can be easily computed:
\begin{equation*}
  \int_\infty^1\!\!\!ds\,\dot y^\mu A_{\mu i}{}^j
  =\int_\infty^1\!\!\!ds\,f'x^k\ve_{ki}{}^j
  =\int_\infty^1\!\!\!ds\,\frac{df}{d(rs)}x^k\ve_{ki}{}^j
  =\frac{x^k\ve_{ki}{}^j}r\big[f(r)-f(\infty)\big].
\end{equation*}

We emphasize that this integral is independent of the choice of the curve $\g$
with starting point at infinity and ending point at $x$, because the curvature
of $\MS\MO(3)$-connection is zero.

That is, the solution of the Eq.~(\ref{ujfkiy}) is
\begin{equation}                                                  \label{unnvmf}
  S^{-1j}_{~~i}=\exp(-f^k\ve_{ki}{}^j),\qquad
  f^k:=\frac{x^k}r\big[f(\infty)-f(r)\big].
\end{equation}
The vector $(f^k)$ is an element of the Lie algebra $\Gs\Go(3)$. Its direction
coincides with the axis of the rotation in the isotopic space and its length is
equal to the rotation  angle.
The exponential map for the $\MS\MO(3)$ group is well-known:
\begin{equation}                                                  \label{unncbg}
  S_i{}^j=\exp(f^k\ve_{ki}{}^j)
  =\dl_i^j\cos F+\frac{f^k\ve_{ki}{}^j}F\sin F
  +\frac{f_if^j}{F^2}(1-\cos F),
\end{equation}
where $F^2:=f^if_i=\big[f(\infty)-f(r)\big]^2$. Note, that Eq.~(\ref{unncbg})
is valid both for positive and negative $F$.

Thus, we have found a general form of the orthogonal matrix for the spherically
symmetric connection (\ref{uncbhf}). Afterwards the $n$-field is defined by
Eq.~(\ref{ubndhy}).
%******************************************************************************
\subsection{Examples of point disclinations}
%*******************************************************************************
The orthogonal matrix~(\ref{unncbg}) is determined by the difference
$f(\infty)-f(r)$, where $f(r)$ is an arbitrary sufficiently smooth function.
Without loss of generality, we put $f(\infty)=0$ and change the sign of $f(r)$.
Then we choose $F(r)=f(r)$. Afterwards the spherically symmetric orthogonal
matrix takes the form
\begin{equation}                                                  \label{unnpbg}
  S_i{}^j(x)=\exp(f^k\ve_{ki}{}^j)
  =\dl_i^j\cos f+\frac{f^k\ve_{ki}{}^j}f\sin f
  +\frac{f_if^j}{f^2}(1-\cos f),
\end{equation}
where
\begin{equation}                                                  \label{unbcgf}
  f^i=\frac{x^i}r f(r),
\end{equation}
$f(r)$ being an arbitrary function.
This orthogonal matrix is clearly spherically symmetric as it should be.

{\bf Hedgehog disclination.}
\begin{figure}[hbt]%----------------------------------------------------------
\hfill\includegraphics[width=.4\textwidth]{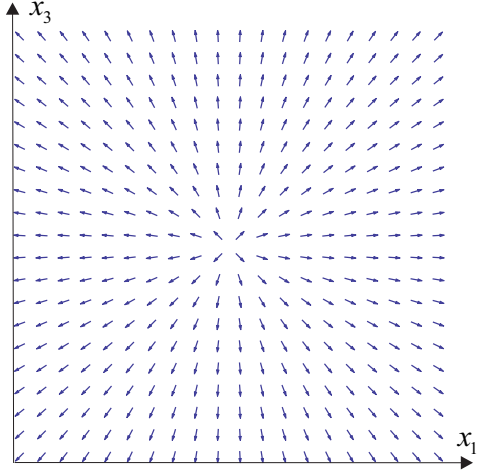}
\hfill {}
\centering\caption{Hedgehog disclination. The section $x_2=0$ is shown in the
figure.}
\label{fdisclhedgehog}
\end{figure}%-------------------------------------------------------------------
The hedgehog disclination is the spherically symmetric distribution of the
$n$-field
$$
n^i(x)=\frac{x^i}r
$$
with the singularity at the origin. Its section $x_2=0$ is shown in
Fig.~\ref{fdisclhedgehog}.

The analysis in the last section tells us that we must have fixed vector $n_0$
at infinity to make the parallel transport, but it clearly breaks the spherical
symmetry. Therefore, this
disclination cannot be described by spherically symmetric orthogonal matrix
(\ref{unnpbg}) starting with the fixed vector at infinity. Hence, we do the
following. First we rotate the vector $n_0$ at infinity by the orthogonal matrix
to make it spherically symmetric and afterwards apply rotation (\ref{unnpbg}).

Let us consider the spherical coordinate system and the orthogonal spherically
symmetric basis $(e_{\hat r},e_{\hat\theta},e_{\hat\vf})$, where
\begin{equation*}
  e_{\hat r}:=\pl_r,\qquad e_{\hat\theta}:=\frac1r\pl_\theta,\qquad
  e_{\hat\vf}:=\frac1{r\sin\theta}\pl_\vf.
\end{equation*}
Bellow Latin indices $i,j,\dotsc$ run through  $(\hat r,\hat\theta,\hat\vf)$.

Let vector $n_0$ at infinity be directed along the $z$-axis: $n_0:=(0,0,1)$ in
Cartesian coordinates. In the spherical coordinates we have
\begin{equation*}
  n_0=n_0^{\hat r}e_{\hat r}+n_0^{\hat\theta}e_{\hat\theta}
  +n_0^{\hat\vf}e_{\hat\vf}=\cos\theta e_{\hat r}-\sin\theta e_{\hat\theta},
\end{equation*}
Now we make rotation by orthogonal matrix in the spherical coordinates
\begin{equation*}
  (P_i{}^j):=\begin{pmatrix} ~~\cos\theta & \sin\theta & 0 \\
   -\sin\theta & \cos\theta & 0 \\ 0 & 0 & 1 \end{pmatrix},
\end{equation*}
which is not spherically symmetric.
Then the vector at infinity becomes
\begin{equation}
\label{inftyrot}
  \tilde n(\theta,\vf):=(n_0^j P_j{}^i)=e_{\hat r}.
\end{equation}
This distribution is clearly spherically symmetric and coincides with that of
the hedgehog disclination at infinity.

%Let  matrix $\hat{P}$ in the Cartesian coordinates be defined as
%$\hat{P}=TPT^{-1}$ and as $\hat{P}=\one$ respectively outside the $z$-axis and
%on the positive part of the $z$-axis. This matrix is degenerate on the negative
%part of this axis. Eq.~(\ref{inftyrot}) implies
%\begin{equation*}
%  \tilde n^i(\theta,\vf)=n_0^{j} P_{j}{}^{i}
%\end{equation*}
%at infinity.
Now we make the vector field on $\MR^3$ by applying spherically symmetric matrix
(\ref{unnpbg}):
\begin{equation*}
  n^i(x)=\tilde n^jS_j{}^i(x)=n_0^kP_k{}^jS_j{}^i=\frac{x^i}r.
\end{equation*}
We see that $n$-field is everywhere directed along radius and has unit length.
So, it describes the hedgehog disclination. It is obtained by applying the
matrix $PS$ to the vector $n_0$ at infinity. This matrix is not spherically
symmetric but orthogonal. It depends on the choice of the smooth
function $f$ such that $f(\infty)=0$. Particularly, if we choose $f\equiv 0$,
then $S\equiv\one$ on the whole $\mathbb{R}^3$.

Alternatively, we can say, that the vector field $n(x)$ of the hedgehog
disclination is obtained by the parallel transport of the vector $n_0$ at the
north pole at infinity by the flat connection
\begin{equation*}
  A_\mu:=\pl_\mu(PS)^{-1}(PS).
\end{equation*}
This connection is not spherically symmetric because of the matrix $P$.

By going back to Cartesian coordinates, one can see that the matrix $PS$ is
defined on the whole $\MR^3$ except the nonpositive part of the $z$-axis. This
singularity cannot be removed, because otherwise we are in contradiction with
the well known Hairy Ball Theorem \cite{EisGuy79}.
\qed

Now consider spherically asymmetric disclinations. Fix the vector at infinity as
$n_0:=(0,0,1)$ in Cartesian coordinates $(x^1,x^2,x^3)$. Thus the spherical
symmetry is broken. Then components of the $n$-field
\begin{equation*}
  n^i(x):=n_0^j S_j{}^i(x),
\end{equation*}
where matrix $S$ is given by Eq.~(\ref{unnpbg}), are
\begin{equation}                                                  \label{ubbcvf}
\begin{split}
  n_1=&-\frac{x_2}r\sin f+\frac{x_1x_3}{r^2}(1-\cos f),
\\
  n_2=&~~\frac{x_1}r\sin f+\frac{x_2x_3}{r^2}(1-\cos f),
\\
  n_3=&\cos f+\frac{x_3^2}{r^2}(1-\cos f).
\end{split}
\end{equation}
Here coordinate indices, for convenience, are lowered to distinguish them from
exponents. These components of $n$-field in spherical coordinates take the
form
\begin{equation}                                                  \label{ummvju}
\begin{split}
  n_1=&-\sin\theta\sin\vf\sin f+\sin\theta\cos\theta\cos\vf(1-\cos f),
\\
  n_2=&~~\sin\theta\cos\vf\sin f+\sin\theta\cos\theta\sin\vf(1-\cos f),
\\
  n_3=&~~\cos f+\cos^2\theta(1-\cos f).
\end{split}
\end{equation}
It means, that the limit of the $n$-field at $r\to0$, which does not depend on
the curve approaching the origin, exists if and only if $f(0)=k\pi$, $k\in\MZ$.
This is the exceptional case, when $n$-field is continuous at zero, and
disclinations do not appear. If $f(0)\ne k\pi$, then the limit of the $n$-field
at $r\to 0$ depends on the path to the origin. Consequently, in a general case,
the origin is an essential singularity, and the model describes point
disclinations at the origin.

After fixing the vector $n_0:=(0,0,1)$, there remains the invariance with
respect to rotations in the $x_1,x_2$ plane. Therefore, for a visual presentation
of disclinations it suffices to put $x_2=0$, i.e., to study distributions of
$n$-field in the plane $x_1,x_3$:
\begin{equation}                                                  \label{unmfju}
\begin{split}
  n_1=&\frac{x_1x_3}{r^2}(1-\cos f),
\\
  n_2=&\frac{x_1}r\sin f,
\\
  n_3=&\cos f+\frac{x_3^2}{r^2}(1-\cos f).
\end{split}
\end{equation}
We see that in a general case vector $n$ has a nonzero component in the
direction perpen\-dicular to the $x_1,x_3$ plane, which makes visualization more
complicated.

Different distributions of $n$-field depend on the choice of the function
$f(r)$. Set $f(\infty)=0$, i.e., the $n$-field coincides with $n_0$ at infinity.
If $f(0)=k\pi$, the unit vector field is continuous at zero and disclinations
are absent. Otherwise there are disclinations with an essential singularity at
the origin.
\begin{figure}[hbt]%----------------------------------------------------------
\hfill\includegraphics[width=.9\textwidth]{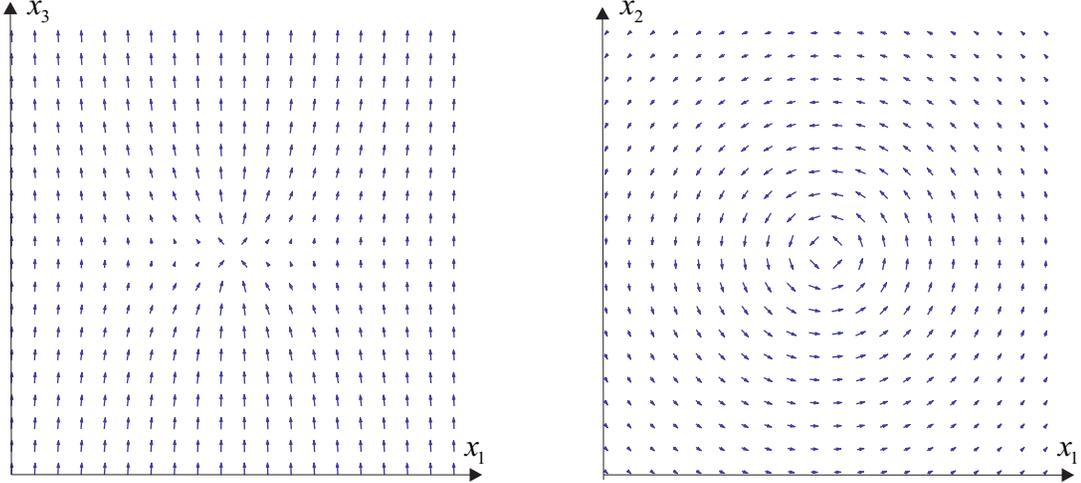}
\hfill {}
\centering\caption{Two sections $x_2=0$ and $x_3=0$ of the disclination with
$f(r):=\pi\ex^{-r}/2$. Arrows are projections of the vector $n$ on the
corresponding plane. If the length of an arrow is smaller then unity, then it
means that the vector has the component in the perpendicular direction. The
spherical symmetry is broken by the boundary condition $n_0:=(0,0,1)$.}
\label{fdisclxzxy}
\end{figure}%-------------------------------------------------------------------

{\bf Example two.}
Let
\begin{equation*}
  f(r):=\frac\pi2\ex^{-r},\qquad\Rightarrow\qquad
  f(0)=\frac\pi2,\qquad f(\infty)=0.
\end{equation*}

In this case, the vector field $n$ (\ref{unmfju}) in the plane $x_2=0$ has all
three non-trivial components. Figure \ref{fdisclxzxy} shows projections of
$n$-field on two planes: $x_2=0$ and $x_3=0$. At infinity, the projection of a
vector field onto the plane $x_2=0$ has the unit length, because the
perpendicular component is absent. In interior points, the projection is
smaller due to the appearance of the perpendicular component. The projection of
vectors $n$ onto the plane $x_3=0$, on the contrary, is zero at infinity and
nontrivial at interior points. It is clearly seen in Fig.~\ref{fdisclxzxy},
which is drawn numerically.
\qed
%******************************************************************************
\section{Conclusion}
%*******************************************************************************
We have used the Chern--Simons action for $\MS\MO(3)$-connection for describing
point discli\-nations in elastic media with unit $n$-field within the geometric
theory of defects. The metric and corresponding vielbein are assumed to be
Euclidean. The nontrivial geometry arises due to nontrivial
$\MS\MO(3)$-connection which has singularity at one point and is flat outside.
The most general spherically symmetric flat connection is found. It depends on
one arbitrary function of radius. Depending on this function, various
possibilities arise. Two examples are considered. The one with spherical
symmetry describes the hedgehog disclination. For constant boundary condition
for $n$-field at infinity which breaks the spherical symmetry, the connection
and curvature, in general, has essential singularity at the origin and describe
point disclinations. One example of such disclination is explicitly constructed.

This work is supported by the Russian Science Foundation under grant
19-11-00320.

%\bibliography{3dgrav,book,gravity,math,my,qft}
%\bibliographystyle{unsrt}
\end{document}